\begin{document}

\copyrightyear{2017}
\acmYear{2017}
\setcopyright{acmcopyright}
\acmConference{NANOCOM '17}{September 27--29, 2017}{Washington D.C.,
DC, USA}\acmPrice{15.00}\acmDOI{10.1145/3109453.3109466}
\acmISBN{978-1-4503-4931-4/17/09}

\title[]{ Received Signal Strength for Randomly Distributed Molecular Nanonodes}

\author{Rafay Iqbal Ansari \\ Chrysostomos Chrysostomou}

\affiliation{%
  \institution{Department of Computer Science and Engineering, Frederick University}
  \city{Nicosia} 
  \state{Cyprus} 
  \postcode{1036}
}
\email{rafay.ansari@stud.frederick.ac.cy, ch.chrysostomou@frederick.ac.cy}

\author{Taqwa Saeed \\ Marios Lestas}
\affiliation{%
  \institution{Department of Electrical Engineering, Frederick University}
  \city{Nicosia} 
  \state{Cyprus} 
  \postcode{1036}
}
\email{taquasalaheldin@gmail.com,
 eng.lm@frederick.ac.cy}

\author{Andreas Pitsillides}
\affiliation{%
  \institution{Department of Computer Science, University of Cyprus}
  \city{Nicosia} 
  \country{Cyprus}}
\email{andreas.pitsillides@ucy.ac.cy}

%
\begin{abstract}
We consider nanonodes randomly distributed in a circular area and characterize the received signal strength when a pair of these nodes employ molecular communication. Two communication methods are investigated, namely free diffusion and diffusion with drift. Since the nodes are randomly distributed, the distance between them can be represented as a random variable, which results in a stochastic process representation of the received signal strength. We derive the probability density function of this process for both molecular communication methods. Specifically for the case of free diffusion we also derive the cumulative distribution function, which can be used to derive transmission success probabilities. The presented work constitutes a first step towards the characterization of the signal to noise ratio in the considered setting for a number of molecular communication methods.
\end{abstract}
%
%
%
%
%
\maketitle
\renewcommand{\shortauthors}{Rafay et al.}
\section{Introduction}

Nanonetwork communication can be molecular based, electromagnetic based or hybrid between the two \cite{misra2014green}. Molecular Communication (MC) has a number of advantages, as for example, biocompatibility, scalability and energy efficiency \cite{paradigmfernando}. MC is based on the encoding and decoding of chemical signals, where molecules are used as information carriers \cite{surveyonmolec15}. MC is nature inspired, as it is the preferred communication method for many kinds of living cells in both macro and micro scales. There are a number of ways with which molecules can be transferred from the transmitter to the receiver. The easiest way is free diffusion, where the transmitter simply diffuses the information molecules into the environment "hoping" that they (or at least part of them) will arrive at the receiver to convey the intended message. When information molecules are released, they move based on Brownian Motion using thermal energy available in the surrounding environment \cite{brownianeck}, which is a relatively slow process. A faster method is diffusion with drift, which involves flows that direct and speed up the movement of the molecules towards the receiver. This communication method is observed in the human body, as for example when hormones (information molecules) are diffused into the blood stream, which absorbs them where needed. In \cite{cobo2010bacteria}, the authors propose the design of a bacteria-based nanonetwork, where messages between nanodevices are carried by \textit{Escherichia coli}. In this type of networks, the message is encoded in double-stranded DNA molecules inside the transmitter. The DNA molecule comprises of a transfer region, a routing region and a message region. The transfer and routing regions contain information related to self replication and transmission, programmable death, self-replication inhabitation and behavioral differences between empty and laden bacteria carriers. The message region includes destination address and the message body. The transmitter releases a certain type of attractant molecules to attract \textit{empty} bacteria carriers and passes a copy of the DNA to the bacteria through conjugation. Then, the receiver nanodevice diffuses molecules, which attract laden bacteria such that the receiver location is where the concentration of attractants reaches its peak value. The bacteria swims towards the area with higher concentration of attractants using flagella, until it reaches the receiver nanonode, to  which a copy of the DNA is then passed through conjugation. Finally, the receiver destroys the bacterium after decapsulating the DNA to prevent the former of redelivering the message. It is also worth mentioning that the bacterium which does not deliver its DNA within a predefined time, suicides to prevent delivering overdue messages.
Another molecular-based method to transfer data is using Molecular Motors (MM). In nature, protein molecular motors like Kinesins transfer cargoes (e.g. lipid vesicles) inside living cells. However, researchers have been able to study Kinesin motors \textit{in vitro} environments \cite{svoboda1994force,hiyama2010biomolecular}. Therefore, molecular motors are considered as a molecular propagation method for synthetic molecular nanonetworks \cite{walking,surveyonmolec15}. MM move over microtubules, which are tubular bodies of diameter $24 nm$ that form a track between the source and destination. The motor's structure consists of two heads, two necks, a stalk and two tails. The tails hold the cargo while the heads are attached to the microtubules. The way the movement is achieved is the following: one of the heads uses hydrolysis of adenosine triphosphate (ATP)  to release energy to detach from the microtubule track, moves forwards in a certain direction, and attach to the microtubule in the new location. The second head performs the same process in the same direction. The repetition of this process causes the "walking" effect of the MM over microtubules tracks \cite{surveyonmolec15}. Furthermore, there is another synthetic configuration with which MM can be used to transport cargoes. In \cite{hiyama2010biomolecular}, it has been indicated that microtubules (MTs) can glide over a surface coated with stationary kinesin to transport cargoes attached to MTs. The authors propose an autonomous loading, transport and unloading system, where cargo-liposomes are labeled with a single stranded DNA (ssDNS) that is complementary to the ssDNA labeling the MTs. So, 15-base ssDNA MT loads 23-base DNA labeled liposomes, diffused at the loading site, through the process of DNA hybridization. The MT then continues gliding over the kinesin-coated surface until it reaches the unloading site, where 23-base ssDNAs that are complimentary to the DNA of the liposome are microarrayed. Then the MT unloads the cargo-liposome through DNA hybridization and continues to glide. Intercellular Calcium Waves (ICW) is a molecular communication used among biological cells. When a cell is triggered, it responds by utilizing 1,4,5-triphosphate ($IP_{3}$) to increase the concentration of cytosolic $Ca^{2+}$. Adjacent cells share a gap of an adjustable permeability to pass selected molecules between cells. Those gaps are the internal way of communicating $IP_{3}$ molecules, which generate the increase in $Ca^{2+}$. However, the ICW can also be communicated externally using ATP molecules, which are released into the surrounding environment to stimulate cells in the vicinity \cite{kuran2012calcium}. The exchanged information is encoded in the amplitude and concentration of the ICW \cite{schuster2002modelling}. A basic design of a molecular communication system based on ICW is proposed in \cite{nakano2005molecular}. Another way of communication among living cells is through Neurochemical propagation. This type of communication is carried out by neurons, either to another neuron or to a junction cell such as in neuromuscular junctions \cite{surveyonmolec15}. The communication is established by diffusing neurotransmitter vesicles by synapse into the synaptic cleft. The vesicles then bond with certain receptors within the membrane of the receiver cell.\\
Significant work has been conducted to model and characterize the physical layer properties of the aforementioned molecular communication methods. Such characterizations are directly related to the transceiver design, the utilized modulation techniques and detection methods, which lead to characterizations of the achieved transmission success probabilities and channel capacities \cite{pamnppm, guo2016molecular, srinivas2012molecular,mahfuz2011detection}. Many of these attempts are diffusion based. The authors of \cite{kuran2010energy}, for example, characterize the capacity of the molecular communication channel based on a binomial distribution model of the received molecules. They also find the probability of a successful reception based on the distribution of the previous and current sent bit. The outage probabilities are well known to be largely dependent on representations of the received signal strength at the receiver and most importantly on the signal to noise (SNR) ratios. Attempts to obtain such representations were primarily based on independent models of the noise behavior \cite{farsad2014channel,pamnppm}. A notable work, which presents analysis of the received signal strength in case of multiple transmissions is reported in \cite{pierobon2014statistical}. Therein randomness in both the locations of the nodes and the transmitted signal are considered, however, this allows only for a closed form expression for the Probability Density Function (PDF) of the received Power Spectral Density (PSD) to be derived using the log-characteristic function. It is reported therein that the PDF of the received signal is not possible to derive.
In this work, we relax the randomness in the transmitted signal and we consider the communication of a pair of nodes randomly distributed in a circular area of radius $r$. We consider two communication methods, namely free diffusion and diffusion with drift. For the former case we derive both the PDF and the cumulative distribution function (CDF) of the received signal strength. The CDF can be used to derive successful transmission probabilities for different transmitted signal strengths, thus being useful in deciding the transmitted signal strength that would lead to specific quality of service guarantees. For the case of diffusion with drift, we derive the probability density of the received signal. This work constitutes a first step in deriving characterizations of the signal to noise for various molecular communication methods.         
The paper is organized as follows, in Section \ref{stat} we describe the available physical channel models for free diffusion and diffusion with drift and then present the PDF and CDF for each model, Section \ref{sim} includes MATLAB simulations that validate the derived expressions, and finally in Section \ref{conc} we offer our conclusion and future directions. 

\section{Modeling and Analysis} \label{stat}
We consider a pair of nanonodes randomly distributed in a circular area of radius $r$. The nodes employ molecular communication, one acting as the transmitter and the other one acting as the receiver as shown in Fig. \ref{system}. Two methods of molecular communication are considered, namely free diffusion and diffusion with drift. The transmitter is assumed to transmit an impulse signal $g(t)$ of intensity $M$ such that: 
\begin{equation}
\int_{-\infty}^{\infty} g(t) dt = M .
\end{equation}
Since the nodes are randomly distributed in the considered area, the distance $X$ between them is a random variable (RV). The received signal strength $Y$ at the receiver is also a RV. The objective is to derive the properties of this random variable, namely, the probability density function and the cumulative distribution function. Below, we present the obtained results for the two considered communication models.     
\begin{figure}[h]
	\centering
		\includegraphics[width=4 cm]{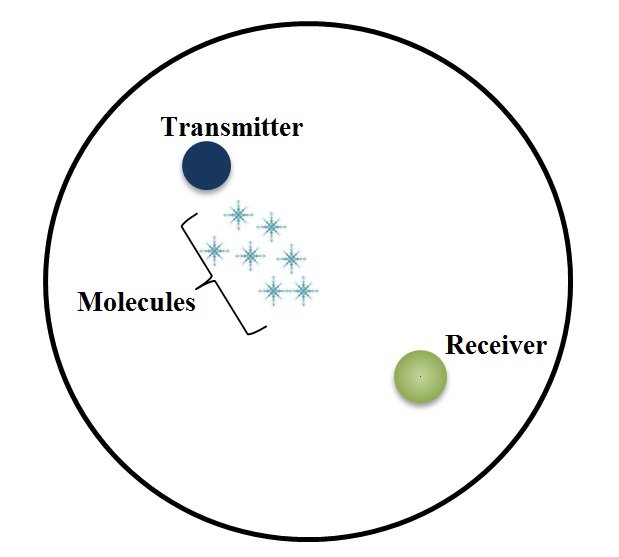}
	\caption{The considered system}
	\label{system}
\end{figure}
\subsection{Free Diffusion}
It is well established in the molecular communications literature that Greens' function can be used to represent the impulse response of the received signal strength at a time $t$ at distance $x$ from the point where the impulse signal is transmitted. The impulse response is given by the following:      
\begin{equation}\label{diffu}
y=M\frac{e^{-\frac{x^2}{4Dt}}}{4 \pi Dt},
\end{equation}
where $D$ is the diffusion coefficient and $M$ is the intensity of the impulse signal (number of transmitted molecules in one short burst). Since the transmitter and the receiver are randomly distributed, the distance $x$ between them is a RV which we denote by $X$. It has been established in \cite{moltchanov2012distance} that the PDF of the distance between two randomly distributed nodes in a circular region of radius $r$ is given by:
\begin{align} \label{eucd}
f_X(x)=\frac{2x}{r^{2}}\left(\frac{2}{\pi}\arccos\left(\frac{x}{2r}\right)-\frac{x}{\pi r}\sqrt{1-\frac{x^2}{4 r^2}}\right).
\end{align}
In the theorem below, we present an expression for the PDF of the received signal strength.
\begin{theorem} \label{thro}
The PDF of the received signal strength $Y$ when two nodes randomly distributed in a circular area of radius $r$ employ molecular communication with impulse response (\ref{diffu}) is given by:
\begin{align}\label{pdf1}
\begin{split}
&h_Y(y)=\frac{4Dt}{\pi r^2 y} \left[2 arccos\left(\frac{(-4Dt \log{(\frac{4 \pi D t y}{M})})^{1/2}}{2r}\right)\right.-\\ 
 &\left.\frac{(-4Dt \log{(\frac{4 \pi D t y}{M})})^{1/2}}{r} \sqrt{1-\frac{(-4Dt \log{(\frac{4 \pi D t y}{M})})}{4r^2}}\right].
\end{split}
\end{align}
\begin{proof}
The PDF of the Euclidean distance between the transmitter and the receiver is given by (\ref{eucd}),
where $0\leq x \leq2r$. In the concentration formula (\ref{diffu}), the square of the Euclidean distance is used. Therefore, we find the distribution of the RV $Z$, where $Z= X^{2}$. The distribution of the Euclidean distance to an arbitrary power $\beta$ is given as: 
\begin{align}\label{veta}
\begin{split}
f_Z(z)&=\frac{1}{\beta}z^{(\frac{1}{\beta}-1)}{f_X(z^{(\frac{1}{\beta})})}.
\end{split}
\end{align}
In our case $\beta = 2$. So, the PDF of $Z$ is expressed as:
\begin{align}\label{ZPDF}
\begin{split}
  f_Z(z) =\frac{1}{\pi r^2} \left(2 \arccos\left(\frac{z^{\frac{1}{2}}}{2r}\right)- \frac{z^{\frac{1}{2}}}{r}\sqrt{1- \frac{z}{4r^2}} \right) ,
\end{split}
\end{align}
where $0\leq z \leq (2r)^{2}$.
Hence, the transformation function becomes:
\begin{align}
\begin{split}
y(z)=\frac{M e^{-\frac{z}{4Dt}}}{4 \pi D t}
\end{split}
\end{align}
Let $Y=y(Z)$ be the derived RV and $h(y)$ be the PDF of the derived RV. We now apply the transformation of the density function to determine $h(y)$, which is defined as:
\begin{align}\label{befin}
\begin{split}
h_{Y}(y)=f_{Z}(z(y)) \left|\frac{dz}{dy}\right|,
\end{split}
\end{align}
where $\frac{dz}{dy}=\frac{-4 Dt}{y}$. 
In order to find the PDF of $h(y)$ we find the inverse of $y(z)$, which results in $z(y)=-4Dt \log{(4 \pi D t y)}$. Accordingly, we incorporate $z(y)$ in (\ref{befin}) to find the expression in (\ref{pdf1}).
\end{proof}
\end{theorem}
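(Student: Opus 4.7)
My plan is to use the standard pushforward (change-of-variables) formula twice, since the composition $X \mapsto X^2 \mapsto Y$ decomposes naturally into two monotone stages. The Greens' function (\ref{diffu}) depends on the random distance only through $X^2$, so I would first introduce the auxiliary variable $Z = X^2$ and derive its density on $[0,(2r)^2]$ from $f_X$. The squaring map is strictly increasing on the support $[0,2r]$ of $X$, so the formula reduces to $f_Z(z) = f_X(\sqrt{z}) \cdot \tfrac{1}{2\sqrt{z}}$; plugging (\ref{eucd}) in and simplifying the prefactor $\tfrac{2\sqrt{z}}{r^2} \cdot \tfrac{1}{2\sqrt{z}} = \tfrac{1}{r^2}$ should yield the expression (\ref{ZPDF}) cleanly.

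Second, I would treat $Y = g(Z)$ with $g(z) = \tfrac{M}{4\pi D t} e^{-z/(4Dt)}$, which is strictly decreasing in $z$, so it is invertible with $z(y) = -4Dt \log\!\bigl(\tfrac{4\pi D t y}{M}\bigr)$. Applying the change-of-variables formula once more gives $h_Y(y) = f_Z(z(y)) \, |dz/dy|$, where $dz/dy = -4Dt/y$ so that $|dz/dy| = 4Dt/y$. Substituting $z(y)$ into $f_Z$ from step one produces exactly the $\arccos$ and square-root terms in (\ref{pdf1}), with the overall prefactor $\tfrac{4Dt}{\pi r^2 y}$ emerging from multiplying $\tfrac{1}{\pi r^2}$ by the Jacobian.

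The support of $Y$ follows from pushing $z \in [0,(2r)^2]$ through $g$: $y$ ranges over $\bigl[\tfrac{M}{4\pi D t} e^{-r^2/(Dt)},\, \tfrac{M}{4\pi D t}\bigr]$, and $h_Y$ vanishes elsewhere. I do not expect any deep obstacle — the argument is a routine two-stage transformation of densities — but the main care point is bookkeeping: tracking the absolute value of the Jacobian (since $g$ is decreasing), confirming monotonicity on each stage so that the formula applies without a sum over preimages, and keeping the constants inside the logarithm consistent. One could alternatively collapse both stages into a single direct map $X \mapsto Y$, but factoring through $Z$ keeps the intermediate densities much simpler and mirrors the structure of the Greens' function itself.
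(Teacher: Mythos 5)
Your proposal is correct and follows essentially the same route as the paper: the two-stage transformation $X \mapsto Z = X^2$ via the power-law density formula (the paper's (\ref{veta}) with $\beta = 2$ is exactly your $f_Z(z) = f_X(\sqrt{z})\,\tfrac{1}{2\sqrt{z}}$), followed by the decreasing map $g(z) = \tfrac{M}{4\pi Dt}e^{-z/(4Dt)}$ with Jacobian $|dz/dy| = 4Dt/y$. Incidentally, your inverse $z(y) = -4Dt\log\bigl(\tfrac{4\pi Dty}{M}\bigr)$ and your explicit support $\bigl[\tfrac{M}{4\pi Dt}e^{-r^2/(Dt)},\,\tfrac{M}{4\pi Dt}\bigr]$ are both more careful than the paper's write-up, which drops the $M$ inside the logarithm (a typo, since the final expression (\ref{pdf1}) retains it) and never states the support of $Y$.
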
 
We now use the PDF given in (\ref{pdf1}) to find the CDF of the concentration, which is represented by the RV $Y$.
\begin{theorem}
The CDF of the RV $Y$ is given by: 
\begin{multline}\label{CDF11}
H_Y(y)=\frac{4 D t}{r^2} (\log{(y)} - \log{(\frac{M}{4 \pi Dt})}) \\ - \frac{8Dt}{\pi r^{2}} \sum_{n=0}^{\infty} \sigma_n - \left( \frac{2^{2n+2} (-Dt \log{(\frac{4 \pi D t y}{M})})^{\frac{3}{2}+n}}{Dt(3+2n)}\right)
\\ -\frac{4 D t}{\pi r^3} \sum_{0}^{\infty} \gamma_{n} -\left(\frac{2^{2+2n}(-Dt \log{(\frac{4 \pi D t y}{M})})^{\frac{3}{2}+n}} {D(3+2n)t}\right)
\end{multline}
\begin{proof}
To find the CDF we integrate the PDF
\begin{equation}
\begin{split}
H_Y(y)&= \int_{-\infty}^{y} h_Y(y) dy = \mathcal{I}_1-\mathcal{I}_2
\end{split}
\end{equation}
For the ease of analysis we have split the integral into $\mathcal{I}_1$ and $\mathcal{I}_2$.
\begin{equation}
\mathcal{I}_1=\frac{8 D t}{\pi r^2}\int_{-\infty}^{y} \frac{1}{y} arccos\left(\frac{(-4 D t \log{(\frac{4 \pi D t y}{M})})^{1/2}}{2r}\right)dy
\end{equation}
To solve $\mathcal{I}_1$ we use the power series expansion $\arccos(c)=\frac{\pi}{2}-\sum_{n=0}^{\infty}\frac{\binom{2n}{n} c^{2n+1}}{4^n(2n+1)}$, for $\quad |c|<1$, yields:
\begin{multline}
\mathcal{I}_1 =\frac{4 D t}{r^2}\int_{\frac{M}{4 \pi D t}}^{y}\frac{1}{y}dy\\ 
  - \frac{8Dt}{\pi r^{2}} \sigma_{n}\int_{\frac{M}{4 \pi D t}}^{y} \frac{1}{y} (-4 D t)^{\frac{2n+1}{2}} (\log{(\frac{4 \pi D t y}{M})})^{\frac{2n+1}{2}}\\
=\frac{4 D t}{r^2} (\log{(y)} - \log{(\frac{M}{4 \pi Dt})}) \quad \quad  \quad \quad \quad \quad \quad \\ - \frac{8Dt}{\pi r^{2}} \sum_{n=0}^{\infty} \sigma_n - \left( \frac{2^{2n+2} (-Dt \log{(\frac{4 \pi D t y}{M})})^{\frac{3}{2}+n}}{Dt(3+2n)}\right),
\end{multline}
where $\sigma_n\overset{\Delta}{=}\frac{\binom{2n}{n} c^{2n+1}}{4^n(2n+1)}$. It is worth noting that the lower limit of the integration is found by evaluating (\ref{diffu}) when $x=0$. \\
$\mathcal{I}_2$ is defined as follows:
\begin{multline}
\mathcal{I}_2=\frac{4 D t}{\pi r^3} \sum_{0}^{\infty} \gamma_{n}
 \int_{\frac{M}{4 \pi D t}}^{y}  \left(-4Dt \log{(\frac{4 \pi D t y}{M})} \right)^{\frac{1}{2} + n} \frac{1}{y}\\
=\frac{4 D t}{\pi r^3} \sum_{0}^{\infty} \gamma_{n} -\left(\frac{2^{2+2n}(-Dt \log{(\frac{4 \pi D t y}{M})})^{\frac{3}{2}+n}} {D(3+2n)t}\right)
\end{multline}
We have used the binomial series expansion 
\begin{equation}
\sqrt{1-\frac{z}{4 r^2}}=\sum_{0}^{\infty} \gamma_n z^{n},
\end{equation}
where the coefficient $\gamma_{n} = \binom{1/2}{n}  \frac{(-1)^n}{4^n r^{2n}}.$ The final result is given in (\ref{CDF11}).
\end{proof}
\end{theorem}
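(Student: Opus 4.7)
The plan is to obtain the CDF by direct integration of the PDF derived in Theorem~\ref{thro}, exploiting the fact that $Y$ is a monotone decreasing function of the squared distance $Z$, so that $Y$ is supported on the interval $\left[\,y_{\min},\,\tfrac{M}{4\pi D t}\,\right]$, where the upper endpoint $\tfrac{M}{4\pi D t}$ corresponds to $x=0$ in \eqref{diffu} and serves as the natural lower limit for the $y$-integration when computing $H_Y(y)=\int h_Y(u)\,du$. I would first fix this lower limit carefully and write $H_Y(y) = \mathcal{I}_1 - \mathcal{I}_2$, splitting according to the two terms (the $\arccos$ term and the $\sqrt{\,\cdot\,}$ term) appearing in \eqref{pdf1}.

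For $\mathcal{I}_1$, my approach is to substitute the Maclaurin series
\begin{equation*}
\arccos(c) \;=\; \frac{\pi}{2} - \sum_{n=0}^{\infty} \frac{\binom{2n}{n}\, c^{2n+1}}{4^n(2n+1)}, \qquad |c|<1,
\end{equation*}
with $c = \tfrac{1}{2r}\bigl(-4Dt\log(4\pi Dty/M)\bigr)^{1/2}$, and then interchange the sum and the integral. The $\pi/2$ piece yields the elementary logarithmic contribution $\tfrac{4Dt}{r^2}\bigl(\log y - \log(M/(4\pi Dt))\bigr)$. Each remaining term reduces, after the substitution $u = -\log(4\pi Dty/M)$ (so $du/u$ style factors become powers of $u$), to an integral of the form $\int u^{(2n+1)/2}\,du$, which evaluates in closed form to the $\tfrac{3/2+n}{}$-power expression displayed in \eqref{CDF11}.

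For $\mathcal{I}_2$, the strategy is analogous but uses the binomial expansion
\begin{equation*}
\sqrt{1 - \tfrac{z}{4r^2}} \;=\; \sum_{n=0}^{\infty} \gamma_n z^n, \qquad \gamma_n \;=\; \binom{1/2}{n}\frac{(-1)^n}{4^n r^{2n}},
\end{equation*}
applied after substituting $z = -4Dt\log(4\pi Dty/M)$. Again swapping sum and integral, each summand becomes an integral of a half-integer power of the logarithm against $dy/y$, producing the second series in \eqref{CDF11}.

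The main obstacle I anticipate is not any individual antiderivative but the justification of the termwise integration: both series expansions are valid only when their arguments lie in the unit disk, which translates to a restriction on the admissible range of $y$ (equivalently, a restriction on the radius $r$ relative to the support of $Z$). I would address this by verifying uniform convergence on compact sub-intervals of the support of $Y$ and invoking dominated convergence, leaving the resulting expression valid on the same range. The remaining bookkeeping—keeping track of the constant $-4Dt$ to the power $(2n+1)/2$, collecting factors of $2^{2n+2}$, and matching the normalization $\sigma_n \overset{\Delta}{=} \binom{2n}{n} c^{2n+1}/(4^n(2n+1))$ used in the statement—is routine once the substitution is in place.
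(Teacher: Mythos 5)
Your proposal follows essentially the same route as the paper's own proof: the identical split $H_Y(y)=\mathcal{I}_1-\mathcal{I}_2$ into the $\arccos$ and square-root contributions, the same Maclaurin series for $\arccos$ and binomial series with coefficients $\gamma_n$, the same termwise integration of half-integer powers of $\log\bigl(\tfrac{4\pi Dty}{M}\bigr)$ against $dy/y$, and the same integration limit $\tfrac{M}{4\pi Dt}$ obtained from $x=0$ in \eqref{diffu}. Your explicit attention to justifying the sum--integral interchange via uniform convergence is a welcome rigor refinement that the paper omits, but it does not change the argument's structure.
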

The CDF derived above can be used to find the success probability when the success threshold, $\tau$, is defined. 
\subsection{Diffusion with Drift}
Similar analysis is done to find the PDF of the concentration when diffusion with drift is employed. The concentration in this case is modeled by Wiener's equation \cite{srinivas2012molecular}:
\begin{equation} \label{diffdrift}
y=\frac{M}{\sqrt{4 \pi D t}}e^{-\frac{(x-vt)^2}{4Dt}},
\end{equation}
where $M$ is the number of molecules released during the short burst, $X$ is a RV representing the Euclidean distance between the transmitter and the receiver, $v$ is the average flow speed and $t$ is the time.  
\begin{theorem}
If the transmitter and receiver exist in a circular region of radius $r$ and are separated by an arbitrary Euclidean distance bounded by the diameter of the circle, the PDF of the diffusion with drift function $y$ is given as:
\begin{align}\label{drifft}
\begin{split}
h_Y(y)&=\frac{(1+(-4 D t log(\sqrt{4 \pi D t} \frac{y}{M})))^{\frac{-1}{2}}vt}{r^2}\\
&\left(\frac{2}{\pi} arccos\left(\frac{(-4 D t log(\sqrt{4 \pi D t} \frac{y}{M}))^{\frac{1}{2}}+vt}{2r}\right)\right.-\\
&\left(\frac{(-4 D t log(\sqrt{4 \pi D t} \frac{y}{M}))^{\frac{1}{2}}+vt}{2r}\right)\\
&\sqrt{1-\frac{((-4 D t log(\sqrt{4 \pi D t} \frac{y}{M}))^{\frac{1}{2}}+vt)^{2}}{4r^2}}.\\
\end{split}
\end{align}
\end{theorem}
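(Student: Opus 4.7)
My plan is to mirror the change-of-variables strategy used in Theorem \ref{thro}, now applied to the Wiener-drift impulse response (\ref{diffdrift}). First I would invert the mapping
\begin{equation*}
y(x) = \frac{M}{\sqrt{4\pi D t}}\,\exp\!\left(-\frac{(x-vt)^2}{4Dt}\right)
\end{equation*}
by taking logarithms, which gives
\begin{equation*}
(x-vt)^2 = -4Dt\,\log\!\left(\sqrt{4\pi D t}\,\frac{y}{M}\right),
\end{equation*}
so $x(y) = vt \pm \sqrt{-4Dt\,\log(\sqrt{4\pi D t}\,y/M)}$. The admissible $y$ are those for which the log argument lies in $(0,1]$ and the resulting $x$ stays in $[0,2r]$.

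Next I would compute the Jacobian. Logarithmic differentiation of (\ref{diffdrift}) yields $dy/dx = -y(x-vt)/(2Dt)$, hence
\begin{equation*}
\left|\frac{dx}{dy}\right| = \frac{2Dt}{y\,|x-vt|} = \frac{2Dt}{y\,\sqrt{-4Dt\,\log(\sqrt{4\pi D t}\,y/M)}}.
\end{equation*}
I would then assemble $h_Y$ via the one-dimensional density transformation $h_Y(y)=f_X(x(y))\,|dx/dy|$, plugging the inter-node distance PDF (\ref{eucd}) in at $x(y)=vt+\sqrt{-4Dt\,\log(\sqrt{4\pi D t}\,y/M)}$. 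Expanding $x/(2r)$ inside the $\arccos$ and $x^2/(4r^2)$ inside the square root reproduces the bracketed factor in (\ref{drifft}), while the Jacobian supplies the prefactor involving the square root of $-4Dt\,\log(\sqrt{4\pi D t}\,y/M)$.

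The main obstacle is non-monotonicity: unlike the pure-diffusion case, here $y(x)$ peaks at $x=vt$ and is two-to-one on any interval containing that point. A rigorous argument must either (i) restrict the support so the map is monotone on the relevant portion of $[0,2r]$ (for example by taking the branch $x\ge vt$, which the statement implicitly does) or (ii) sum the densities from the two branches $x=vt\pm\sqrt{\cdot}$, in which case both arccosine/square-root terms should appear with $\pm$ signs. A secondary but routine task is to track the admissible range of $y$ so that $\arccos$ receives an argument in $[-1,1]$ and the inner square root stays real; this produces the effective support of $h_Y$ that should accompany (\ref{drifft}). Once the branch choice is fixed, the remaining work is the direct substitution described above and reduces to algebraic simplification.
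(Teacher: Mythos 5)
Your plan follows essentially the same route as the paper's proof, merely collapsed into a single step: where you invert $y(x)$ directly and compute the one Jacobian $\left|dx/dy\right| = 2Dt\big/\big(y\sqrt{-4Dt\log(\sqrt{4\pi D t}\,y/M)}\big)$, the paper composes three transformations --- the shift $U=X-vt$ applied to (\ref{eucd}), the square $Z=U^{2}$ handled by the power formula (\ref{veta}) with $\beta=2$, and the exponential map with $|dz/dy|=4Dt/y$ --- and the product of those Jacobians equals yours. Both computations therefore land on
\begin{equation*}
h_Y(y)=\frac{4Dt}{y}\cdot\frac{z^{-1/2}\left(z^{1/2}+vt\right)}{r^{2}}\left(\frac{2}{\pi}\arccos\left(\frac{z^{1/2}+vt}{2r}\right)-\frac{z^{1/2}+vt}{\pi r}\sqrt{1-\frac{\left(z^{1/2}+vt\right)^{2}}{4r^{2}}}\right),
\end{equation*}
with $z=-4Dt\log(\sqrt{4\pi D t}\,y/M)$. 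Incidentally, the prefactor printed in (\ref{drifft}) --- $(1+z)^{-1/2}vt/r^{2}$, with the factor $4Dt/y$ absent --- matches neither your derivation nor the paper's own intermediate steps, and appears to be a typographical corruption of $z^{-1/2}(z^{1/2}+vt)$.

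The one place you go beyond the paper is your non-monotonicity caveat, and you are right that it is not hypothetical: since $X\in[0,2r]$, the shifted variable $U=X-vt$ is negative with positive probability whenever $vt>0$, so the paper's appeal to (\ref{veta}) --- which presumes a nonnegative base variable --- is a misapplication, and the correct density of $Z=U^{2}$ is $\frac{1}{2\sqrt{z}}\left(f_U(\sqrt{z})+f_U(-\sqrt{z})\right)$. The paper's proof silently keeps only the branch $x=vt+\sqrt{z}$, which is exactly the omission your option (ii) would repair; as a consequence the stated $h_Y$ captures only that branch's probability mass and does not integrate to one in general. So executed with the single branch $x\ge vt$, your plan reproduces the paper's derivation, gap included; executed with the two-branch sum, it proves a corrected version of the theorem that the paper does not actually contain.
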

\begin{proof}
The PDF of the Euclidean distance between the nanodevices uniformly distributed in a circle of radius $r$ is given in (\ref{eucd}). First, a linear transformation $u(x)=x-vt$ is applied on the RV $X$, resulting in
\begin{equation}
\begin{split}
h_{U}(u)&=f_{(u(x))}\left|\frac{du}{dy}\right|\\
&=\frac{2(u+vt)}{r^{2}}\left(\frac{2}{\pi}\arccos\left(\frac{(u+vt)}{2r}\right)-\right.\\ 
&\left.\frac{(u+vt)}{\pi r}\sqrt{1-\frac{(u+vt)^2}{4 r^2}}\right).
\end{split}
\end{equation}
In the concentration formula (\ref{diffdrift}) the square of $u$ is used. Therefore, we find the distribution of the RV $Z$ using (\ref{veta}), where $Z= U^{2}$ as follows:
\begin{equation}
\begin{split}
f(z)&=\frac{z^{-\frac{1}{2}}(z^{\frac{1}{2}}+vt)}{r^{2}}\left(\frac{2}{\pi}\arccos\left(\frac{(z^{\frac{1}{2}}+vt)}{2r}\right)\right.- \\
&\left. \frac{(z^{\frac{1}{2}}+vt)}{\pi r}\sqrt{1-\frac{(z^{\frac{1}{2}}+vt)^2}{4 r^2}}\right).
\end{split}
\end{equation}
Let $Y=y(Z)$ be the derived RV and $h(y)$ be the PDF of the derived RV. We now apply the transformation of the density function to determine $h(y)$, which is defined as:
 \begin{align}\label{drift8}
\begin{split}
h(y)=f(z(y)) \left|\frac{dz}{dy}\right|
\end{split}
\end{align}
where $\frac{dz}{dy}=\frac{-4 Dt}{y}$. In order to find the PDF of $h(y)$ we find the inverse of $y(z)$, which results in $z(y)=(-4 D T log(\sqrt{4 \pi D t} \frac{y}{M}))$. Accordingly, we incorporate $z(y)$ in (\ref{drift8}) to find the expression in (\ref{drifft}).
\end{proof}
\section{Simulation Results}
\label{sim}
In this section, we conduct simulations to validate the analytical models found in the previous section. Our simulation setup consists of two points generated randomly within a circle of radius $r$. The Euclidean distance between these two points is then found and substituted in the received signal strength equations (\ref{diffu}) and (\ref{diffdrift}) for diffusion based communication and diffusion with drift respectively. This procedure is repeated $100,000$ times for values of $x$ in the range $[0,r]$.  The probability of each outcome is then calculated. We use Monte Carlo simulations to verify the robustness of the results.  
\subsection{Free Diffusion}

Fig.\ref{pdfr} (a) compares the analytical and simulation results for the case of the radius $r$ being equal to $3 mm$. Fig.\ref{pdfr} (b) conducts a similar comparison for the case of $r$ being equal to $8 mm$. The diffusion coefficient $D$ is considered of value $10^{-5} \frac{cm^{2}}{s}$ throughout the rest of the paper. It is observed that very good agreement between the simulation and analytical results is reported. In addition, it can be observed that as the radius increases smaller received signal strength values are obtained, which is expected, as there is a higher probability of loss when the distance between the nanonodes is larger.    
\begin{figure}\centering
\subfigure[r=3 mm]{\includegraphics[width=4cm]{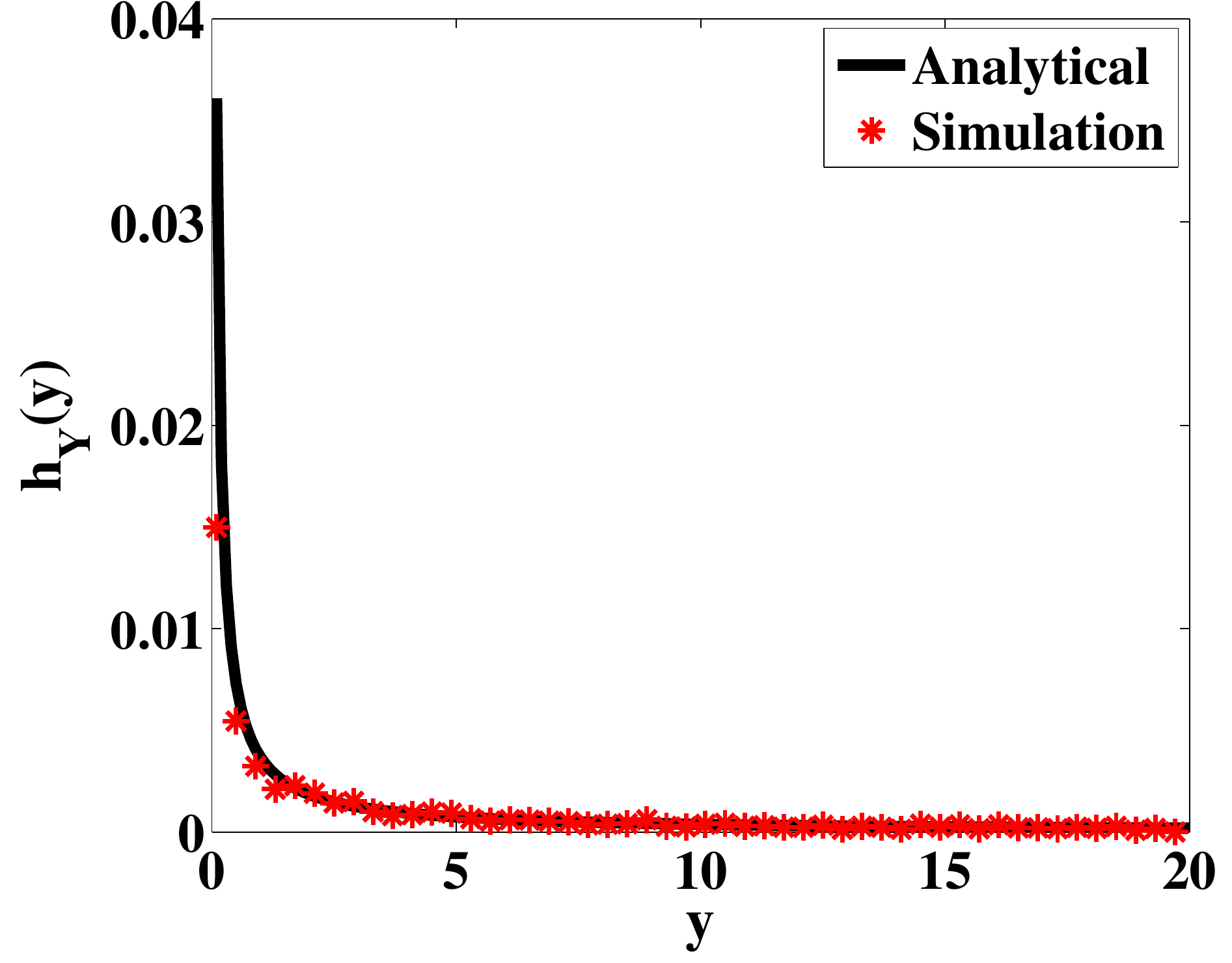}}
\centering
\hfill
\subfigure[r=8 mm]{\includegraphics[width=4cm]{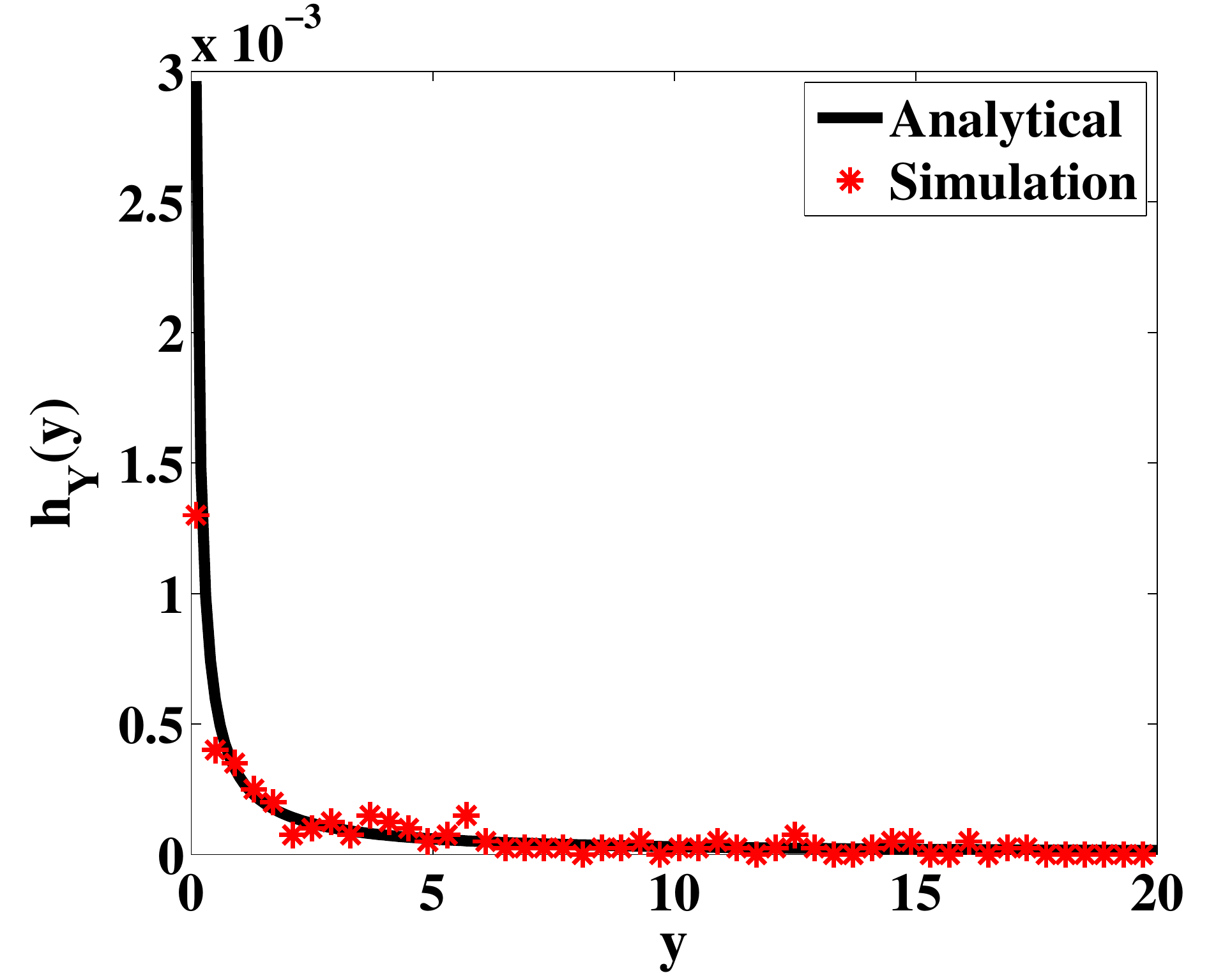}}
\hfill
\caption{PDF of the concentration for circular region of radius $r$}
\label{pdfr}
\end{figure}
As mentioned in the previous section, we use the PDF of the concentration to find the CDF, which defines the probability with which the concentration at the receiver will exceed a certain level when the number of molecules diffused at the transmitter is known. The CDF simulation results are found using the same method as the PDF results. Fig. \ref{cdfr} shows the CDF vs. values of $y$ between $1$ and $10$ for (a) $r=1.8$ $m$ and $M = 1$ and (b) $r=2.25$ $m$ and $M = 1$ at time $t=300s$. It can be observed that there is a good matching between the analytical and simulation results. The probability of success can be found from the CDF when a threshold for a successful transmission is defined. 
\begin{figure}\centering
\subfigure[r=1.8 m]{\includegraphics[width=5cm]{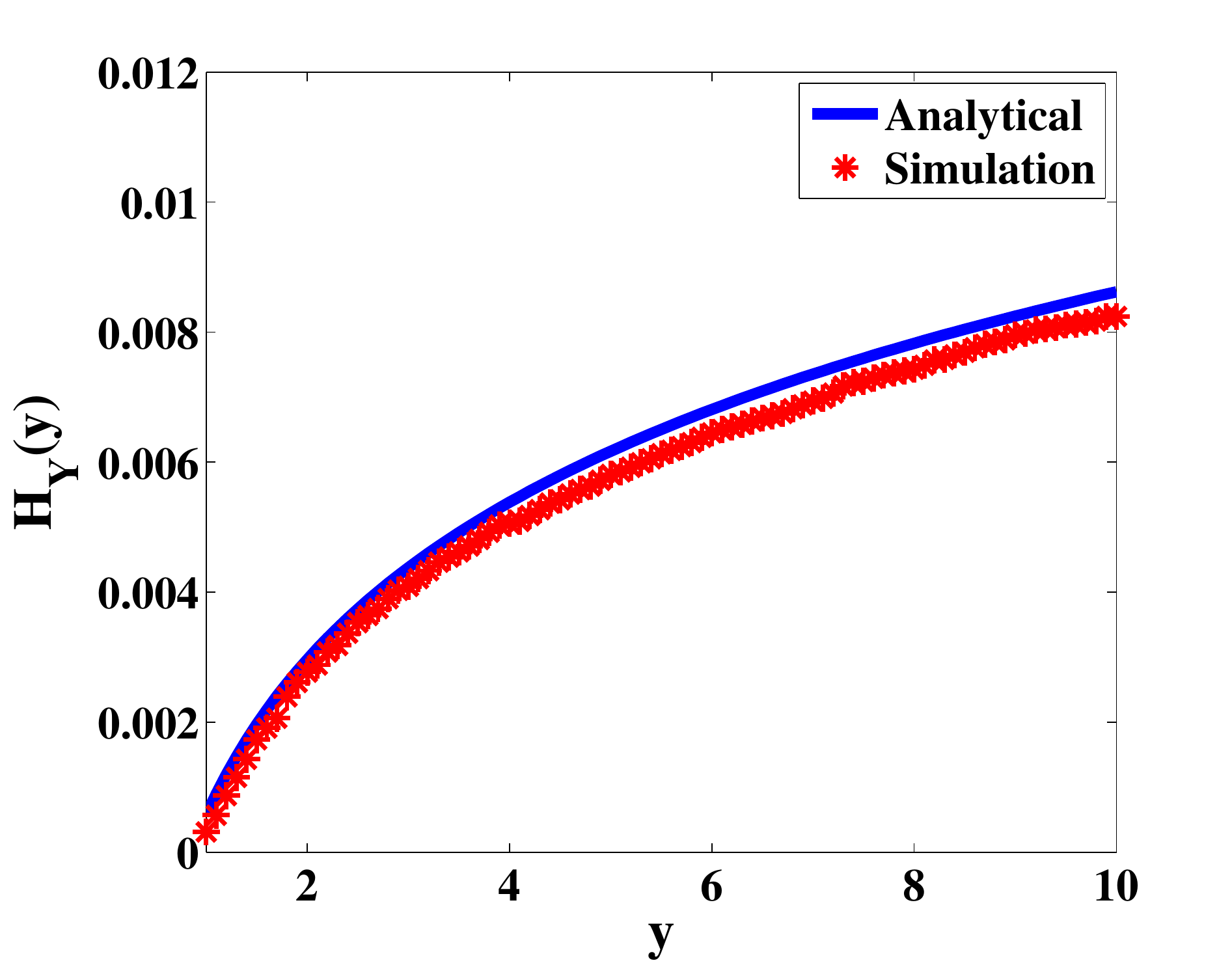}}
\centering
\hfill
\subfigure[r=2.25 m]{\includegraphics[width=5cm]{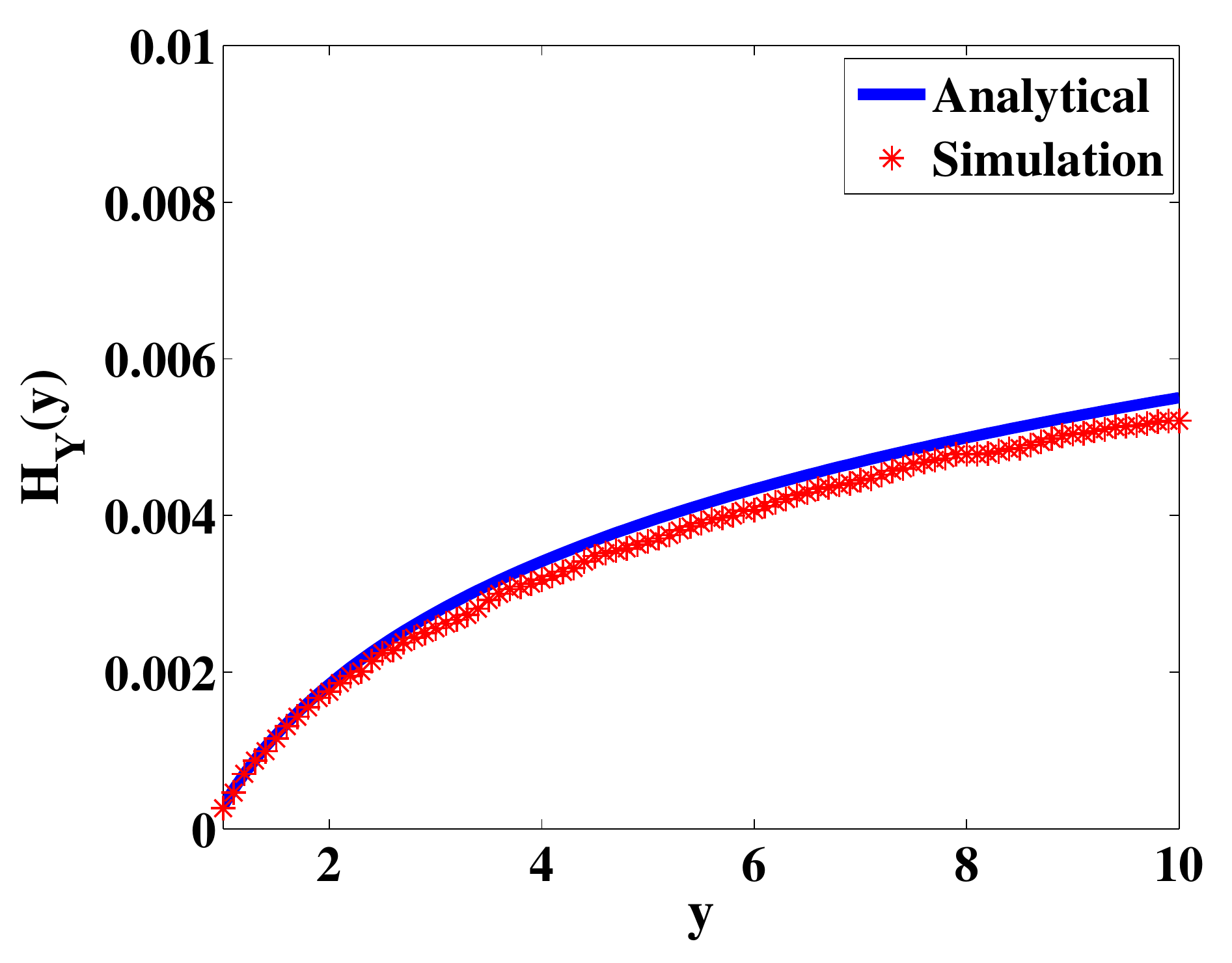}}
\hfill
\caption{CDF of the concentration for circular region of radius $r$}
\label{cdfr}
\end{figure}
Either the PDF or the CDF can be used to characterize the probabilities of successful message delivery at the receiver. Here, we indicatively show how such an analysis can be pursued for a simple receiver scheme. The scheme assumes successful reception of a $1$, if the concentration exceeds a particular threshold value $\tau$. The success probability is thus calculated by finding the probability of the received signal strength exceeding the threshold value. The CDF can be used to calculate the latter. Due to the approximations used in the formula of the CDF in (\ref{CDF11}) it has not been possible to validate it for every substitute. This numerical problem will be addressed in our future work. Nevertheless, we use numerical integration of the PDF to evaluate the success probability. We choose the threshold concentration value, after which a transmission is considered successful, $\tau$ to be $0.01$ $mol/mm^{2}$ and we plot the success probability versus the number of transmitted molecules for a particular time instance. This can be used as a guideline with which to decide how many molecules to transmit in order to achieve a particular success probability. We fix $r$ to $1.2$ mm and $t$ to $100$ seconds. The results in Fig. \ref{sucvsM} depict a concave type behavior; sharp increase at the beginning and saturation taking place at higher $M$ values, indicating a threshold value beyond which high success probability is achieved. We indicate on the diagram the value of $M$ that leads to a success probability of $0.9$. We then investigate how the value of $M$ that would suffice to achieve a particular success probability changes with the radius $r$. Again we consider a success probability of $90\%$. In Fig. \ref{radvsM} we plot $r$ as a function of $M$ with values of the latter ranging from $100$ molecules to $700$ whereas $r$ takes values from $1.1 mm - 1.5 mm$. It is worth noting that the values should have been in the micrometer scale, however, numerical problems have been reported when distances in micrometers are substituted. Thus, we use values in the millimeter region.
\label{stat}
\begin{figure}[h]
	\centering
		\includegraphics[width=5 cm]{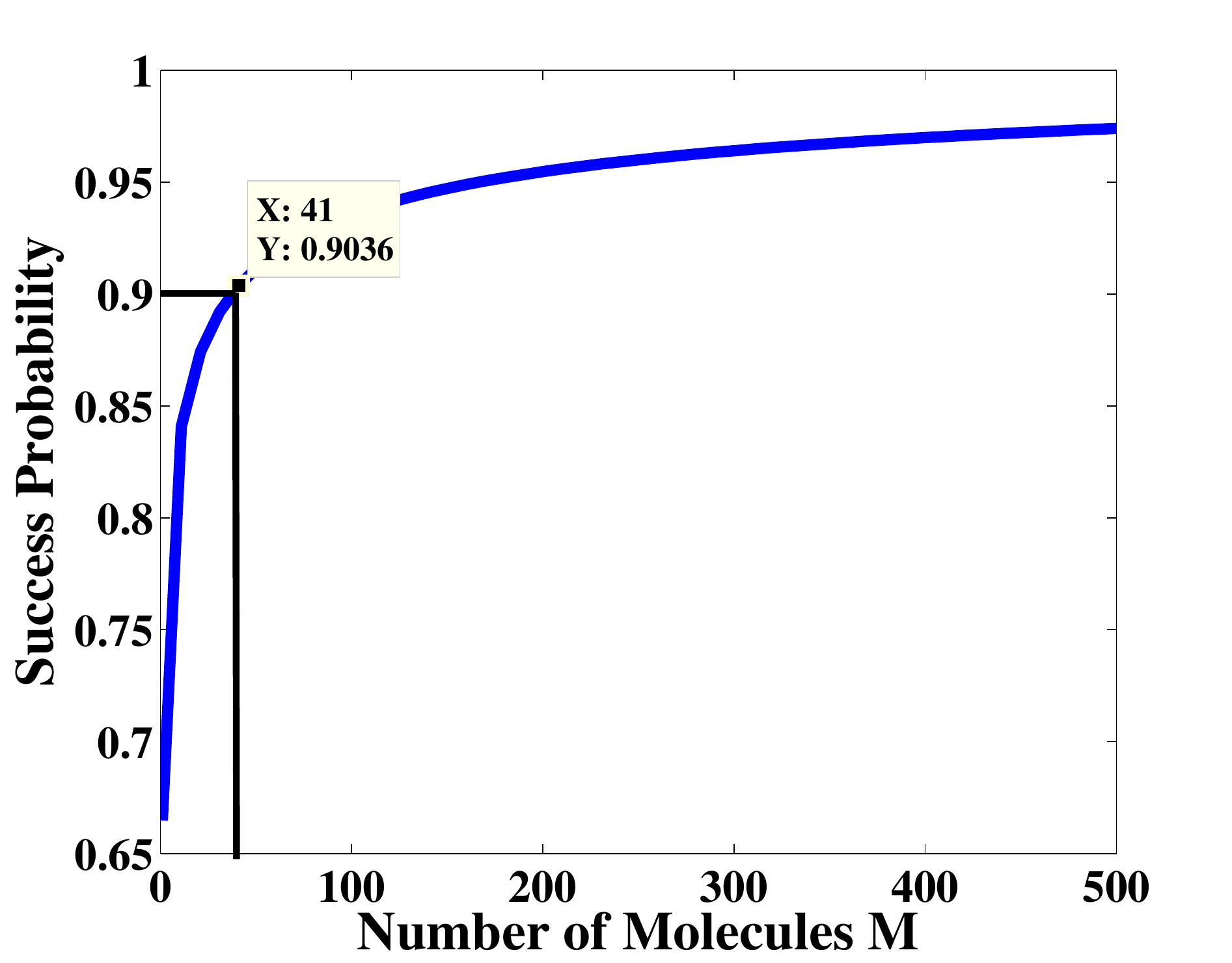}
	\caption{The probability of success versus the transmitted number of molecules }
	\label{sucvsM}
\end{figure}
\label{stat}
\begin{figure}[h]
	\centering
		\includegraphics[width=5 cm]{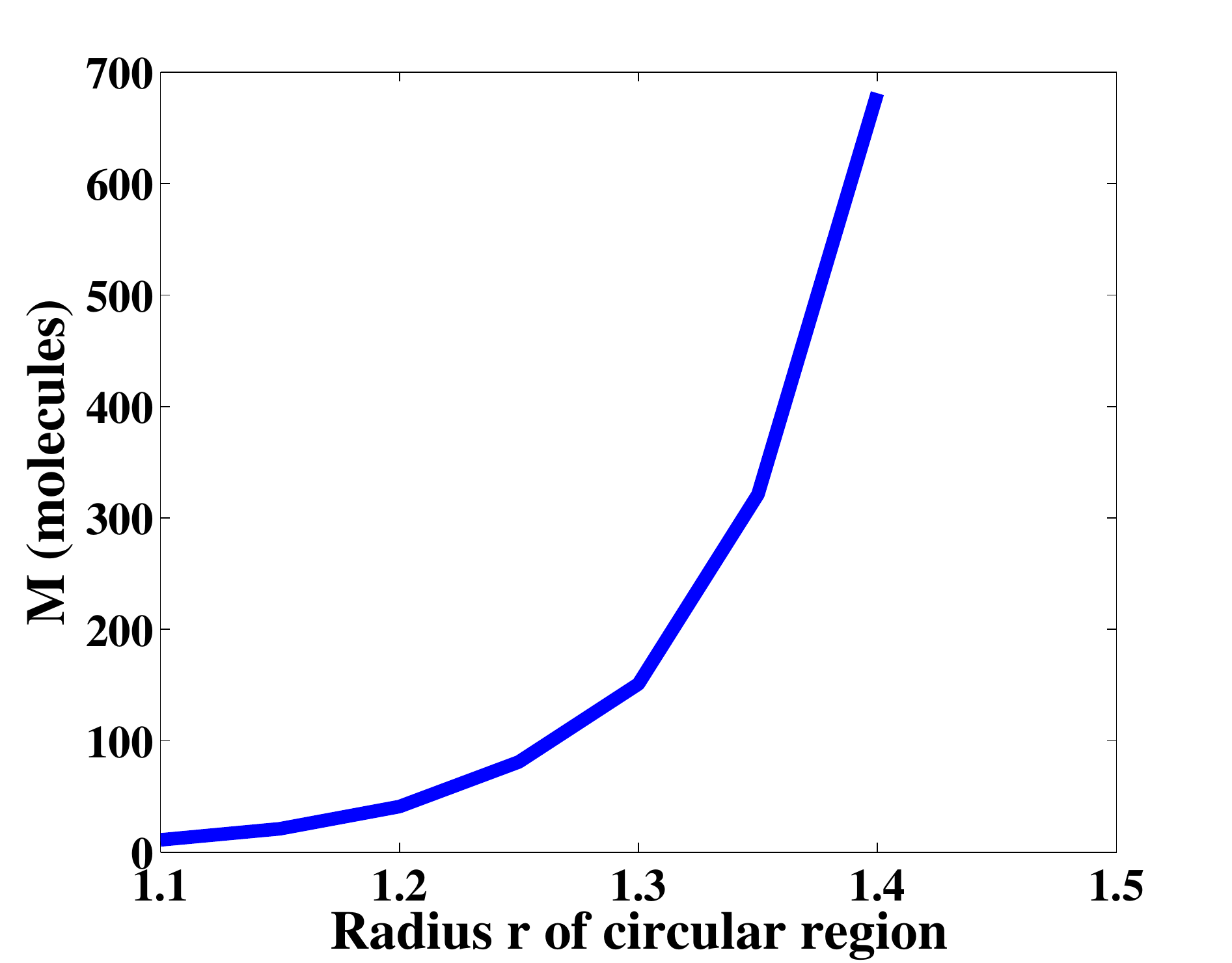}
	\caption{The threshold number of transmitted molecules versus the radius}
	\label{radvsM}
\end{figure}
\subsection{Diffusion with Drift}
We use similar simulation methodology to compare the PDF of the received signal strength in the case of diffusion with drift, when analysis and simulations are used. The difference is that we use equation (\ref{diffdrift}) instead of equation (\ref{diffu}). Fig. \ref{driftpdf} (a) compares analytical and simulation results in the case of $r$ being equal to $2 m$ whereas Fig. \ref{driftpdf} (b) compares the results in the case of $r=4 m$. Fairly good matching is observed between the two. The oscillatory behavior of the simulation results can be attributed to the singularities in the derived closed form expression, which render the computations challenging.  
\begin{figure}\centering
\subfigure[r=2 m]{\includegraphics[width=5cm]{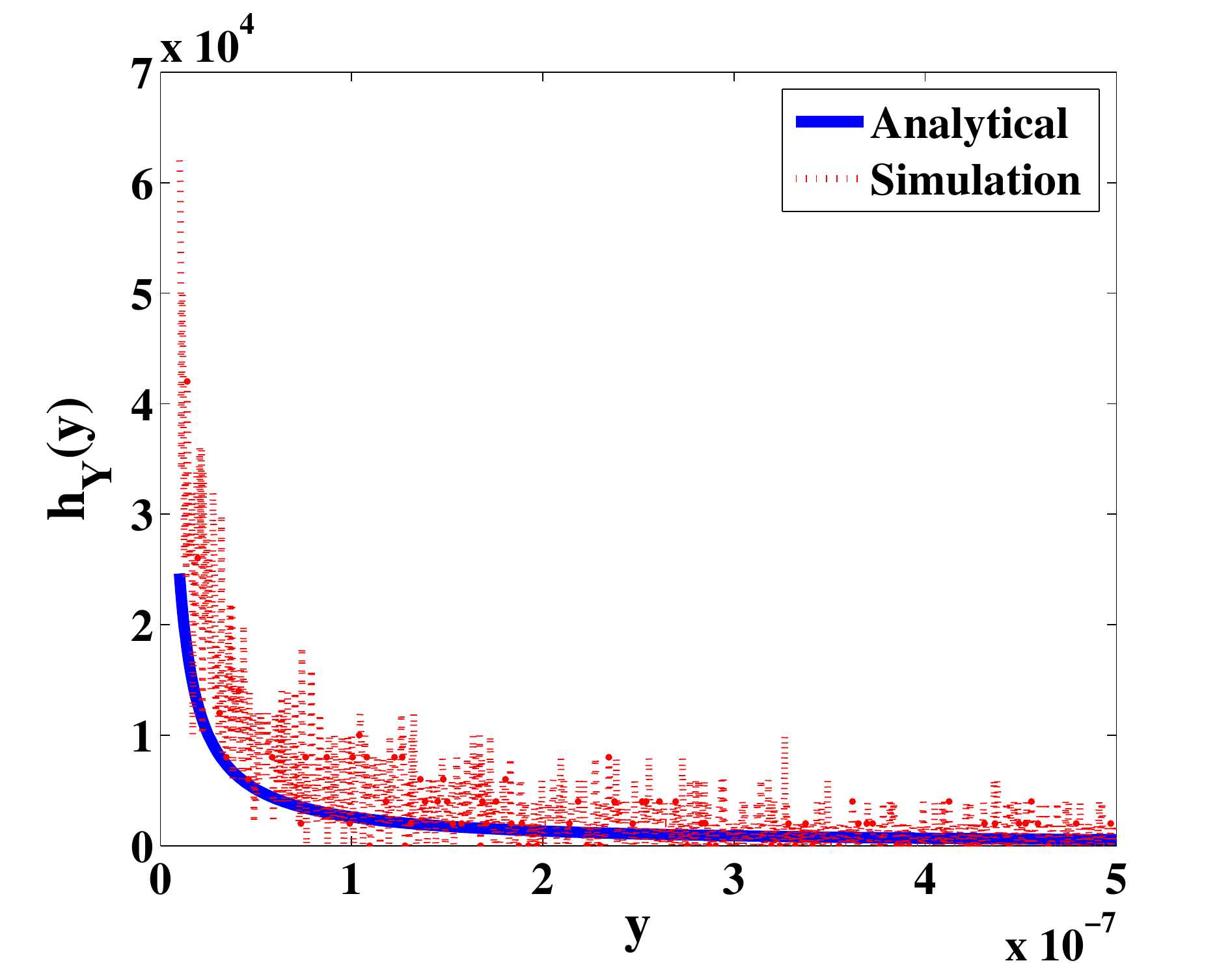}}
\centering
\hfill
\subfigure[r=4 m]{\includegraphics[width=5cm]{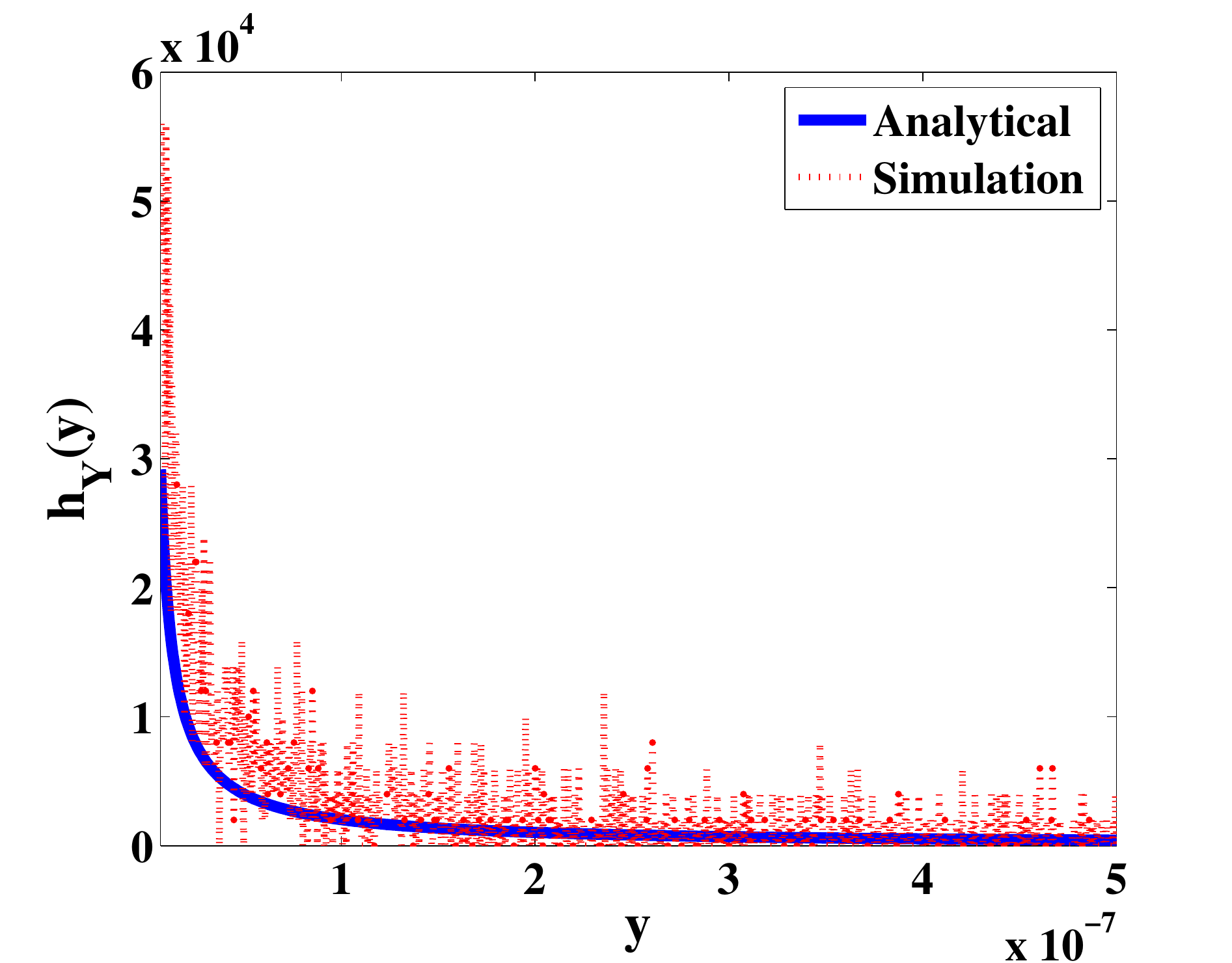}}
\hfill
\caption{PDF of diffusion with drift for circular region of radius $r$}
\label{driftpdf}
\end{figure}
%
%
\section{Conclusion} \label{conc}
In this paper we consider the molecular communication between a pair of nodes randomly distributed in a circular region. We review molecular communication methods and we derive the PDF of the received signal strength when free diffusion and diffusion with drift are employed. For the case of free diffusion the CDF is also derived and used to find desirable signal strengths when particular outage probabilities are required. The analytical results are compared with simulation results and good matching is observed between the two. In the future, we aim at calculating the PDF and CDF of the received signal strength for different communication methods, when multiple transmissions are conducted towards a particular receiver. This will allow for characterization of the SNR properties of the channel. It is also our aim to consider geometries beyond circular, as for example rectangular.    

\begin{acks}
This work was partially funded by the European Union via the Horizon 2020: Future Emerging Topics call (FETOPEN), grant EU736876, project VISORSURF (http://www.visorsurf.eu).
\end{acks}
\bibliography{reff} 
\bibliographystyle{ACM-Reference-Format}
\end{document}